\documentclass[aps,pra,twocolumn,superscriptaddress,nofootinbib,longbibliography]{revtex4-2}

\usepackage{amsmath,amssymb}
\usepackage{amsthm}
\usepackage{braket}
\usepackage{graphicx}
\usepackage[colorlinks=true,allcolors=blue]{hyperref}

\newtheorem{theorem}{Theorem}
\newtheorem{lemma}{Lemma}

\newcommand{\Ctrue}{C_{\star}}
\newcommand{\xstar}{x_{\star}}
\newcommand{\mean}[1]{\langle #1 \rangle}

\newcommand{\Tr}{\mathrm{Tr}}
\newcommand{\Dtr}{D_{\mathrm{tr}}}

\begin{document}

\title{A Margolus--Levitin speed limit for observables:\\
mean energy bounds expectation-value change quadratically}

\author{Bryan Nasr}
\email{bryannasr4@gmail.com}
\affiliation{Independent Researcher}

\date{23 August 2026}

\begin{abstract}
The Mandelstam--Tamm and Margolus--Levitin quantum speed limits bound how fast a \emph{state}
evolves, using the energy variance and the mean energy above the ground state, respectively. Speed
limits on \emph{observables}---the change of an expectation value $\mean{A(t)}$---have so far used the
energy \emph{variance} (the Mandelstam--Tamm/quantum-Fisher-information lineage); the \emph{mean} energy has
not been brought to bear on the change $\Delta\mean{A}$ in a fixed state. We close this branch. First, we prove a \emph{no-go} theorem: there is no state-independent \emph{linear}
mean-energy bound on the time to change an observable's expectation by $\Delta=|\mean{A(T)}-\mean{A(0)}|$;
the optimal \emph{state-independent} mean-energy exponent of $\Delta$ is exactly two. Second, we prove the corresponding sharp
\emph{quadratic} bound,
$T\,(\mean{H}-E_0)\ge \Ctrue\,\Delta^2/\sigma_A^2$,
valid for every time-independent Hamiltonian $H$ (ground energy $E_0$), every bounded observable $A$ with
spectral spread $\sigma_A=(\lambda_{\max}-\lambda_{\min})/2$, and every pure \emph{or mixed} state, with the
dimension-independent constant $\Ctrue=1/(8\sin \xstar)=0.172506267461\ldots$, where $\xstar$ is the smallest positive
root of $\tan(x/2)=x$. The bound is tight, approached (though not attained) by a near-ground two-level family. We then promote it to
the \emph{exact} energy--time/swing trade-off curve of which $\Ctrue$ is the small-swing slope---tight at every
swing, the observable analog of the Giovannetti--Lloyd--Maccone curve for states---prove the full constant
survives for mixed states via joint convexity of the trace distance, sharpen the constant for
bandwidth-limited generators and for several observables at once, and show the quadratic law degrades to a
linear one when the initial state is an eigenvector of the observable (a sufficient condition). The result
completes the (mean-energy $\times$ observable) corner of the speed-limit landscape and, being
quadratic, is most constraining where the linear variance bounds are weakest. We discuss its cleanest physical home---autonomous quantum clocks, where it furnishes a coherent
mean-energy resolution floor complementary to the known entropy and rate bounds---and delimit honestly the
settings it does not constrain.
\end{abstract}

\maketitle

\section{Introduction}
Quantum speed limits (QSLs) quantify how fast a quantum system can change. Two foundational results anchor
the field. The Mandelstam--Tamm (MT) bound~\cite{MandelstamTamm1945} uses the energy \emph{variance}
$\Delta H=\sqrt{\mean{H^2}-\mean{H}^2}$: the time to reach an orthogonal state obeys
$T_\perp\ge \pi/(2\Delta H)$ ($\hbar=1$). The Margolus--Levitin (ML) bound~\cite{MargolusLevitin1998} uses
instead the \emph{mean} energy above the ground state, $T_\perp\ge \pi/[2(\mean{H}-E_0)]$; the two are
unified and tight together~\cite{LevitinToffoli2009}. Both, and the large literature that
followed~\cite{DeffnerCampbell2017}, bound the motion of the \emph{state} (fidelity, Bures angle,
orthogonalization).

A parallel and rapidly growing line bounds the motion of an \emph{observable}---the rate of change of an
expectation value $\mean{A(t)}=\Tr[\rho(t)A]$---which is what an experiment actually
reads~\cite{GarciaPintos2022UnifyingObservables,mohan2022quantum,Bringewatt2024GeneralizedGeometric,Carabba2022operatorflows,Cafaro2026FluctuationGrowth}.
This program now includes a first-of-kind multiparticle experiment~\cite{Miao2025Experiment}. With few
exceptions, these observable QSLs bound the expectation-value \emph{rate} through the energy \emph{variance}.
The unifying bound of García-Pintos \emph{et al.}~\cite{GarciaPintos2022UnifyingObservables} reads
$|\dot a|\le \Delta A\sqrt{I_F}$ with $I_F$ the quantum Fisher information ($I_F=4\Delta H^2$ for pure
states); Mohan and Pati~\cite{mohan2022quantum} obtain $|\mathrm d\mean{O}/\mathrm dt|\le 2\Delta O\,\Delta H$;
the geometric generalization of Bringewatt \emph{et al.}~\cite{Bringewatt2024GeneralizedGeometric} again
uses the (generalized) Fisher information. All are \emph{linear} in the observable change and powered by an
energy \emph{variance}. The \emph{mean}-energy resource that drives the Margolus--Levitin bound has so far
reached observables only as an \emph{operator}-distance bound for operator
flows~\cite{Carabba2022operatorflows} or through the mean of a transition Hamiltonian for macroscopic
transitions~\cite{Hamazaki2022SpeedLimits}---never as a bound on the change $\Delta\mean{A}$ of a bounded
observable in a fixed state. This is the gap we fill.

We prove that the mean energy controls observable change \emph{quadratically} and not linearly. Concretely
(Sec.~\ref{sec:main}), for a time-independent $H$ with $E_0=\min\mathrm{spec}(H)$, a bounded Hermitian $A$
with spectral spread $\sigma_A=(\lambda_{\max}(A)-\lambda_{\min}(A))/2$, and $\Delta=|\mean{A(T)}-\mean{A(0)}|$,
\begin{equation}
\boxed{\;T\,(\mean{H}-E_0)\;\ge\;\Ctrue\,\frac{\Delta^2}{\sigma_A^2}\;},
\qquad \Ctrue=\frac{1}{8\sin \xstar},
\label{eq:main}
\end{equation}
with $\xstar=2.331122370\ldots$ the smallest positive root of $\tan(x/2)=x$, giving
$\Ctrue=0.172506267461\ldots$. A matching no-go (Sec.~\ref{sec:nogo}) shows no \emph{linear} mean-energy
bound can exist; the quadratic law is the best possible. We then sharpen Eq.~\eqref{eq:main} to the
\emph{exact} tight trade-off curve valid at every swing (Sec.~\ref{sec:constant}). We work in units $\hbar=1$ and, without loss of
generality, shift $E_0=0$ so that $H\succeq0$; $\hbar$ is restored in the applications.

\section{The speed-limit dichotomy}
\label{sec:dichotomy}
The result completes a simple $2\times2$ classification (Table~\ref{tab:dichotomy}), organized by
\emph{what moves} (a state or an observable) and \emph{which energy functional powers the bound} (the
variance or the mean energy).

\begin{table*}[t]
\renewcommand{\arraystretch}{1.6}
\begin{ruledtabular}
\begin{tabular}{lcc}
 & resource $=$ variance $\Delta H$ (Mandelstam--Tamm) & resource $=$ mean energy $\mean{H}-E_0$ (Margolus--Levitin)\\
\hline
state, $1-F$ & $L\le \Delta H\,T$ & $1-F\le K\,T(\mean{H}-E_0)$ \quad (linear)\\
observable, $\Delta\mean{A}$ & $T\ge \Delta/(2\Delta H\,\sigma_A)$ \quad (linear) & $T(\mean{H}-E_0)\ge \Ctrue\,\Delta^2/\sigma_A^2$ \quad \textbf{(quadratic; this work)}\\
\end{tabular}
\end{ruledtabular}
\caption{The speed-limit dichotomy: \emph{what moves} (a state or an observable) $\times$ \emph{which energy
functional} powers the bound (variance or mean energy). Three corners were known; the lower-right
(mean energy $\times$ observable) corner is completed here and is the \emph{only} quadratic one.}
\label{tab:dichotomy}
\end{table*}

\noindent The mechanism behind that gap is transparent, and it is the heart of the no-go. The mean energy
bounds only how far the \emph{state} moves---a geometric quantity---through the ML inequality
$1-F\le K\,T(\mean{H}-E_0)$, $F=|\!\braket{\psi(0)|\psi(T)}\!|$. An observable change reads that geometry only
through $\Delta\le 2\sigma_A\sqrt{1-F^2}$, a \emph{square root}. Composing a square root (geometric) with a
linear (energetic) estimate yields $\Delta\propto\sqrt{T(\mean{H}-E_0)}$, i.e.\ a quadratic law. By contrast
MT bounds the observable \emph{velocity} $|\mathrm d\mean A/\mathrm dt|=|\mean{[H,A]}|\le 2\Delta H\,\sigma_A$
directly---no square root---hence linear. (Here $K=\sup_{x>0}(1-\cos x)/x=\sin\xstar$; see
Sec.~\ref{sec:main}.)

\section{No-go: no linear mean-energy bound}
\label{sec:nogo}
\begin{theorem}[No-go]
For time-independent $H$ and bounded Hermitian $A$ there is no positive state-independent constant $\kappa$
with $T(\mean{H}-E_0)\ge \kappa\,\Delta/\sigma_A$ for all $(H,A,\psi)$. More strongly, the sharp
mean-energy bound is quadratic: the exponent of $\Delta$ is exactly $2$.
\end{theorem}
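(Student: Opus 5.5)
The plan is to build an explicit one-parameter family of two-level examples in which $T(\mean{H}-E_0)$ scales like $\Delta^2$, so that the ratio $T(\mean{H}-E_0)/(\Delta/\sigma_A)$ can be made arbitrarily small. Take $H=\omega\ket{1}\bra{1}$, so $E_0=0$. Take $A=\sigma_x$ in the basis $\{\ket0,\ket1\}$, so that $\sigma_A=1$. Use the near-ground state $\ket{\psi}=\sqrt{1-p}\,\ket0+\sqrt{p}\,\ket1$ with $0<p\ll1$.

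First I compute the two sides.
\begin{itemize}
\item The mean energy is $\mean{H}=p\,\omega$.
\item The expectation value is $\mean{A(t)}=2\sqrt{p(1-p)}\cos(\omega t)$.
\item Choosing $T=\pi/\omega$ gives $\Delta=4\sqrt{p(1-p)}$.
\end{itemize}
Hence $T(\mean{H}-E_0)=\pi p$, while $\Delta/\sigma_A\approx 4\sqrt{p}$. Their ratio is $\pi\sqrt{p}/\bigl(4\sqrt{1-p}\bigr)\to0$ as $p\to0$. Any putative $\kappa>0$ is therefore violated once $p<(4\kappa/\pi)^2$ (up to the $1-p$ factor), which proves the first claim.

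For the exponent statement I would run the same family for a general exponent $\alpha$ in a bound $T(\mean{H}-E_0)\ge\kappa\,(\Delta/\sigma_A)^\alpha$. Along the family the left side is $\Theta(p)$ and the right side is $\Theta(p^{\alpha/2})$. For $\alpha<2$ the ratio $p^{1-\alpha/2}\to0$, so no such bound holds and exponents below two are excluded. Since $\Delta/\sigma_A\le2$ always, any valid bound with exponent $\alpha$ implies one with every larger exponent. The optimal exponent is therefore the infimum of admissible exponents, and it is at least $2$.

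To show that $2$ is attained, I would invoke the quadratic bound stated in Eq.~\eqref{eq:main} and proved in Sec.~\ref{sec:main}. Taking that result as given, the argument sketched in Sec.~\ref{sec:dichotomy} shows why it holds:
\begin{itemize}
\item the ML-type inequality $1-F\le K\,T\mean{H}$ with $K=\sin\xstar$;
\item the observable bound $\Delta\le2\sigma_A\sqrt{1-F^2}\le 2\sigma_A\sqrt{2(1-F)}$.
\end{itemize}
Combining them gives $T\mean{H}\ge \Delta^2/(8K\sigma_A^2)$, which is exactly exponent two with $\Ctrue=1/(8\sin\xstar)$. For mixed states I would purify the state, or use convexity, as the abstract indicates.

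The main subtlety is making "the exponent is exactly $2$" precise. It must be read as the infimum over exponents $\alpha$ for which a state-independent $\kappa>0$ exists. The family above must also be checked to remain valid uniformly: it needs $\Delta\to0$ while $\omega$ stays free. Since $\omega$ cancels out of $T\mean{H}$, the scaling holds at any energy scale, and the argument closes.
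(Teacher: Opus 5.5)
Your argument is correct and has the same skeleton as the paper's proof: Eq.~\eqref{eq:main} shows that exponent $2$ is admissible, and near-ground two-level states rule out anything smaller. The difference is in the family used for the lower direction.

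The paper invokes the saturating family of Sec.~\ref{sec:constant} (optimal phase $\beta=\xstar/2-\pi/2$, $ET=\xstar$, $\theta\to0$) and reads the no-go off the sharp infimum, so its argument depends on the saturation analysis. You instead use the real-amplitude, half-period family. This is the same one the paper uses in Sec.~\ref{sec:apps} to compare with Mandelstam--Tamm. For it, $T\mean{H}\sigma_A^2/\Delta^2=\pi/[16(1-p)]\to\pi/16$. That constant is not optimal, but the one-line computation is all that the no-go and the exclusion of every exponent $\alpha<2$ require. As a result, the nonexistence half of the theorem no longer depends on Eq.~\eqref{eq:main} or on any sharpness claim. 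You also make ``the exponent is exactly $2$'' precise by noting that $\Delta\le2\sigma_A$ makes the set of admissible exponents upward closed, which the paper leaves implicit.

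What the paper's route gives in return is the sharp rate, $P_{\min}(\Delta)/(\Delta/\sigma_A)\sim\Ctrue\Delta/\sigma_A$. Its phrasing $P_{\min}(\Delta)=\Ctrue\Delta^2/\sigma_A^2$ is, however, only correct to leading order as $\Delta\to0$. At any fixed $\Delta>0$ the elementary bound $(1-\sqrt{1-\delta^2})/K$, with $\delta=\Delta/2\sigma_A$, is already strictly larger. Your explicit-family argument avoids this subtlety entirely.
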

\begin{proof}
By Eq.~\eqref{eq:main}, $P:=T(\mean{H}-E_0)\ge \Ctrue\Delta^2/\sigma_A^2$, and the near-ground two-level
family of Sec.~\ref{sec:constant} saturates it, $P_{\min}(\Delta)=\Ctrue\Delta^2/\sigma_A^2$ as an infimum.
Hence $P_{\min}(\Delta)/(\Delta/\sigma_A)=\Ctrue\,\Delta/\sigma_A\to0$ as $\Delta\to0$: any fixed positive
$\kappa$ is violated by near-ground two-level states. The tight exponent is $2$.
\end{proof}

\noindent The microscopic reason is that $\mean{A(t)}=\sum_{m,n}\overline{c_m}c_n A_{mn}e^{i(E_m-E_n)t}$ is a
\emph{signed} sum over Bohr frequencies weighted by coherences, in contrast to the probability-weighted,
ground-referenced sum $\chi(T)=\sum_n p_n e^{-iE_nT}$ that the mean energy controls. The coherence weight that
sets $\Delta$ and the populations that set $\mean H-E_0$ are independent, which is exactly why a naive linear
transfer of the ML state bound fails.

\section{The sharp quadratic bound}
\label{sec:main}
\begin{theorem}[Main]
Let $H$ be a time-independent Hermitian operator on a finite-dimensional Hilbert space with $E_0=0$
($H\succeq0$), and let $A$ be a bounded Hermitian observable with spectral spread $\sigma_A$. For any pure
state, with $\Delta=|\mean{A(T)}-\mean{A(0)}|$,
\begin{equation}
T\,\mean{H}\;\ge\;\Ctrue\,\frac{\Delta^2}{\sigma_A^2},\qquad \Ctrue=\frac{1}{8K},\quad K=\sup_{x>0}\frac{1-\cos x}{x}.
\end{equation}
\end{theorem}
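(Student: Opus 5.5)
The plan has two links: a Margolus--Levitin-type bound on the fidelity loss in terms of $T\mean{H}$, and a purely geometric bound on $\Delta$ in terms of the fidelity. Composing them, as sketched in Sec.~\ref{sec:dichotomy}, gives the quadratic law.

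\emph{Energetic step.} Write $\ket{\psi}=\sum_n c_n\ket{E_n}$ with $p_n=|c_n|^2$ and $E_n\ge0$. Set $\chi(T)=\braket{\psi(0)|\psi(T)}=\sum_n p_n e^{-iE_nT}$. Then
\[
1-F\;\le\;1-\mathrm{Re}\,\chi(T)\;=\;\sum_n p_n\bigl(1-\cos E_nT\bigr),
\]
since $F=|\chi|\ge \mathrm{Re}\,\chi$. By the definition of $K$, each term satisfies $1-\cos(E_nT)\le K E_nT$; for $E_n=0$ both sides vanish. Summing gives
\[
1-F\le K\,T\mean{H}.
\]
I will also record that $K=\sin\xstar$. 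The maximizer of $(1-\cos x)/x$ obeys $x\sin x=1-\cos x$, i.e.\ $\tan(x/2)=x$, and the maximum value equals $\sin x$ there. This is only needed to identify the constant.

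\emph{Geometric step.} Show $\Delta\le 2\sigma_A\sqrt{1-F^2}$. Shift $A\to A-\tfrac{\lambda_{\max}+\lambda_{\min}}{2}$, which changes neither $\Delta$ nor $\sigma_A$; now $\|A\|=\sigma_A$. Then
\[
\Delta=|\Tr[A(\rho_T-\rho_0)]|\le \|A\|\,\|\rho_T-\rho_0\|_1 .
\]
For pure states $\|\rho_T-\rho_0\|_1=2\sqrt{1-F^2}$. This gives the claim.

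\emph{Composition.} Since $1+F\le2$, we have $1-F^2\le 2(1-F)\le 2KT\mean{H}$. Therefore $\Delta^2\le 4\sigma_A^2\cdot 2KT\mean{H}$, which rearranges to
\[
T\mean{H}\ge \frac{\Delta^2}{8K\sigma_A^2}=\Ctrue\,\frac{\Delta^2}{\sigma_A^2}.
\]

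Each step is routine, so I do not expect a real obstacle for the inequality itself. The points that need care are three. First, the trace-distance identity for pure states must be stated correctly. Second, the recentering of $A$ must be justified, since it is what turns the operator norm into $\sigma_A$ rather than $\|A\|$. Third, the step $1+F\le 2$ discards a factor, and this loss is precisely what makes the constant sharp only in the small-swing limit ($F\to1$). So tightness is not part of this statement; it is deferred to the near-ground two-level family of Sec.~\ref{sec:constant}.
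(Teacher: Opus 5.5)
Your proposal is correct and follows the paper's proof essentially step for step. The shared steps are: the Hölder bound with $A$ centered so that $\|A\|_{\mathrm{op}}=\sigma_A$ and $\|\rho_T-\rho_0\|_1=2\sqrt{1-F^2}$; the survival-amplitude estimate $1-F\le 1-\mathrm{Re}\,\chi(T)\le K\,T\mean{H}$ from the tangent-line inequality; and the composition through $1-F^2\le 2(1-F)$, which costs exactly the factor behind the paper's sharper non-asymptotic form $T\mean{H}\ge(1-\sqrt{1-\delta^2})/K$.
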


\begin{proof}
Work in the energy eigenbasis, $\ket{\psi}=\sum_n c_n\ket n$, $p_n=|c_n|^2$. Both sides scale as
$\sigma_A^2$, and $\Delta$ and the dynamics of $\mean{A(t)}$ are invariant under $A\to A+cI$ (since
$\Tr[\rho_T-\rho_0]=0$); center $A$ so that $\|A\|_{\mathrm{op}}=\sigma_A$ and set $\sigma_A=1$.

\emph{Step 1 (geometry).} With $\rho_t=\ket{\psi(t)}\!\bra{\psi(t)}$ and $X=\rho_T-\rho_0$ (traceless,
Hermitian), Hölder's inequality gives $\Delta=|\Tr[AX]|\le\|A\|_{\mathrm{op}}\|X\|_1=\|X\|_1$. For two pure
states the nonzero eigenvalues of $X$ are $\pm\sqrt{1-F^2}$, $F=|\!\braket{\psi(0)|\psi(T)}\!|$, so
$\|X\|_1=2\sqrt{1-F^2}$ and
\begin{equation}
\Delta\le 2\sqrt{1-F^2}.
\label{eq:step1}
\end{equation}

\emph{Step 2 (energetics).} The survival amplitude is $\chi(T)=\sum_n p_n e^{-iE_nT}$ and $F\ge\mathrm{Re}\,\chi(T)$.
By the tangent-line lemma below, $1-\cos x\le Kx$ for all $x\ge0$, so
\begin{equation}
1-F\le 1-\mathrm{Re}\,\chi(T)=\sum_n p_n(1-\cos E_nT)\le K\,T\mean{H}.
\label{eq:step2}
\end{equation}

\emph{Combine.} $1-F^2=(1-F)(1+F)\le 2(1-F)\le 2KT\mean H$. Inserting into \eqref{eq:step1},
$\Delta^2\le 4(1-F^2)\le 8KT\mean H$, i.e.\ $T\mean H\ge \Delta^2/(8K)=\Ctrue\Delta^2$ (with $\sigma_A=1$).
Restoring $\sigma_A$ gives the claim. Retaining the exact geometry $1-F\ge1-\sqrt{1-\delta^2}$
($\delta=\Delta/2\sigma_A$) before the step $1-F^2\le2(1-F)$ yields the sharper non-asymptotic form
$T\mean H\ge(1-\sqrt{1-\delta^2})/K$, whose leading term is $\Ctrue\Delta^2/\sigma_A^2$ and which is strictly
stronger for every finite $\delta$ (by a factor approaching $2$ as $\delta\to1$).
\end{proof}

\noindent The energetics (Step~2)---ground-referencing $H$, the survival amplitude $\chi(T)$, and the
eigenbasis cosine estimate---follows the statistical-distance derivation of the Margolus--Levitin bound by
Jones and Kok~\cite{JonesKok2010GeometricDerivation} (corrected by Zwierz~\cite{Zwierz2012Comment}) and the
survival-amplitude inequality of Bhattacharyya~\cite{Bhattacharyya1983}; the new step is to place a bounded
observable on the left via Hölder (Step~1), converting the state-distance bound into one on $\Delta\mean{A}$.
Since $\chi(T)$ contains only populated levels, the bound also holds with $\mean H$ referenced to the lowest
\emph{populated} energy $E_{\min}^{\mathrm{supp}}\ge E_0$, a state-dependent strengthening.

\begin{lemma}[Tangent line]
$K:=\sup_{x>0}(1-\cos x)/x$ is attained at the unique stationary point $\xstar$ of $(1-\cos x)/x$ in
$(0,2\pi)$---equivalently, the smallest positive root of $x\sin x=1-\cos x\Leftrightarrow\tan(x/2)=x$---and
$K=\sin\xstar$. The line $y=Kx$ is tangent to $1-\cos x$ at $\xstar$ and lies above it for all $x\ge0$, with
equality only at $x=0$ and $x=\xstar$.
\end{lemma}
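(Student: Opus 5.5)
We study $f(x)=(1-\cos x)/x$ on $(0,\infty)$ directly. First, $f$ is continuous and positive on $(0,\infty)$, with $f(x)\to0$ as $x\to0^+$ (since $1-\cos x\sim x^2/2$) and $0\le f(x)\le 2/x\to0$ as $x\to\infty$. So the supremum $K$ is a maximum attained at an interior stationary point. For $x\ge 2\pi$ we have $f(x)\le 2/x\le 1/\pi$. On $(0,2\pi)$ we will exhibit a value exceeding $1/\pi$: for example $f(\pi)=2/\pi$. Hence every maximizer lies in $(0,2\pi)$, and it suffices to analyze stationary points there.

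Next we compute $f'(x)=\bigl(x\sin x-(1-\cos x)\bigr)/x^2$, so stationary points are the zeros of $g(x)=x\sin x-1+\cos x$. On $(0,2\pi)$ we use the half-angle identities $1-\cos x=2\sin^2(x/2)$ and $\sin x=2\sin(x/2)\cos(x/2)$. These give
\[
g(x)=2\sin(x/2)\bigl(x\cos(x/2)-\sin(x/2)\bigr).
\]
Since $\sin(x/2)>0$ on $(0,2\pi)$, the sign of $g$ is that of $h(x)=x\cos(x/2)-\sin(x/2)$. For $x\in(0,\pi]$ we have $\cos(x/2)\ge0$.

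The main step is showing that $h$ has exactly one zero in $(0,2\pi)$. We compute $h'(x)=\tfrac12\cos(x/2)-\tfrac{x}{2}\sin(x/2)$ and $h(0)=0$. The cleanest route is $h''(x)=-\sin(x/2)-\tfrac{x}{4}\cos(x/2)$.
\begin{itemize}
\item[(a)] On $(0,\pi]$, $h''<0$ because both terms are negative.
\item[(b)] On $(\pi,2\pi)$, $\cos(x/2)<0$. Dividing $h(x)=0$ by $\cos(x/2)$ gives $x=\tan(x/2)$, whose right side increases from $-\infty$ to $0$ there. So no root exists on $(\pi,2\pi)$, and indeed $h(x)<0$ there since both terms of $h$ are negative.
\end{itemize}
On $(0,\pi)$, $h$ is strictly concave with $h(0)=0$, $h'(0)=1/2>0$ and $h(\pi)=-1<0$. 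Hence $h$ has exactly one zero $\xstar$ in $(0,\pi)$, with $h>0$ on $(0,\xstar)$ and $h<0$ on $(\xstar,2\pi)$.

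It follows that $f$ increases on $(0,\xstar)$ and decreases on $(\xstar,2\pi)$, so $\xstar$ is the unique maximizer in $(0,2\pi)$. Combined with the first step, $K=f(\xstar)$. For $x\in(0,2\pi)$, the stationarity condition $g=0$ is equivalent to $\tan(x/2)=x$. The statement's claim that $\xstar$ is the \emph{smallest} positive root also follows, since we showed $h>0$ on $(0,\xstar)$. At $\xstar$, $1-\cos\xstar=\xstar\sin\xstar$, so $K=(1-\cos\xstar)/\xstar=\sin\xstar$.

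Finally we prove tangency and the equality cases. The line $y=Kx$ passes through $(\xstar,1-\cos\xstar)$, and its slope $K=\sin\xstar$ equals the derivative of $1-\cos x$ at $\xstar$, so it is tangent there. The inequality $1-\cos x\le Kx$ for $x>0$ is exactly $f(x)\le K$, and it also holds trivially at $x=0$. For the equality cases:
\begin{itemize}
\item[(a)] On $(0,2\pi)$, strict monotonicity of $f$ gives $f(x)=K$ only at $\xstar$.
\item[(b)] For $x\ge2\pi$, $f(x)\le 1/\pi<2/\pi\le K$.
\end{itemize}
So equality holds only at $x=0$ and $x=\xstar$.
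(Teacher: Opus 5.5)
Your proof is correct and follows the calculus route the paper implies: the stationary point of $(1-\cos x)/x$ via $x\sin x=1-\cos x$, with the function increasing on $(0,\xstar)$. The paper itself gives no written proof, only Fig.~\ref{fig:tangent} and the Giovannetti--Lloyd--Maccone citation, so your reduction to $(0,2\pi)$ via $f\le 2/x$, the factorization $g=2\sin(x/2)\,h$, and the concavity argument for $h$ supply rigor the paper omits. Two harmless slips: $h''(x)=-\tfrac34\sin(x/2)-\tfrac{x}{4}\cos(x/2)$, not $-\sin(x/2)-\tfrac{x}{4}\cos(x/2)$, though your sign conclusion on $(0,\pi]$ still holds; and $f$ is only nonnegative on $(0,\infty)$, since it vanishes at $x=2\pi k$.
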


\noindent This tangent-line constant is the orthogonality ($q\to0$) member of the generalized
Margolus--Levitin construction of Giovannetti, Lloyd, and Maccone~\cite{Giovannetti2003QuantumLimits}; the
same $K\approx0.7246$ underlies the operator-flow bound of Ref.~\cite{Carabba2022operatorflows}. Applying the
theorem to $E_{\max}I-H\succeq0$ (which shares $F$, $\sigma_A$, and $\Delta$) gives the ceiling-referenced
companion $T\,(E_{\max}-\mean H)\ge\Ctrue\Delta^2/\sigma_A^2$ with the \emph{same} constant; combined,
$T\ge\Ctrue\Delta^2/[\sigma_A^2\min(\mean H-E_0,\,E_{\max}-\mean H)]$---the observable analog of the
bounded-spectrum (dual) Margolus--Levitin bound~\cite{Ness2022BoundedSpectrumQSL}, whose floor is weakest at
mid-spectrum and does not vanish as $\mean H$ grows. The proof uses only the pointwise tangent inequality and a
finite mean energy, so it extends to a separable infinite-dimensional Hilbert space with bounded $A$ and a
spectral measure of finite first moment.

\noindent Figure~\ref{fig:tangent} shows the lemma. All inequalities are elementary and exact; we verified
the resulting bound numerically over more than $1.5\times10^{6}$ sampled configuration--time points in
dimensions $2$--$6$, with zero violations (Sec.~\ref{sec:numerics}).

\begin{figure}[t]
\includegraphics[width=\columnwidth]{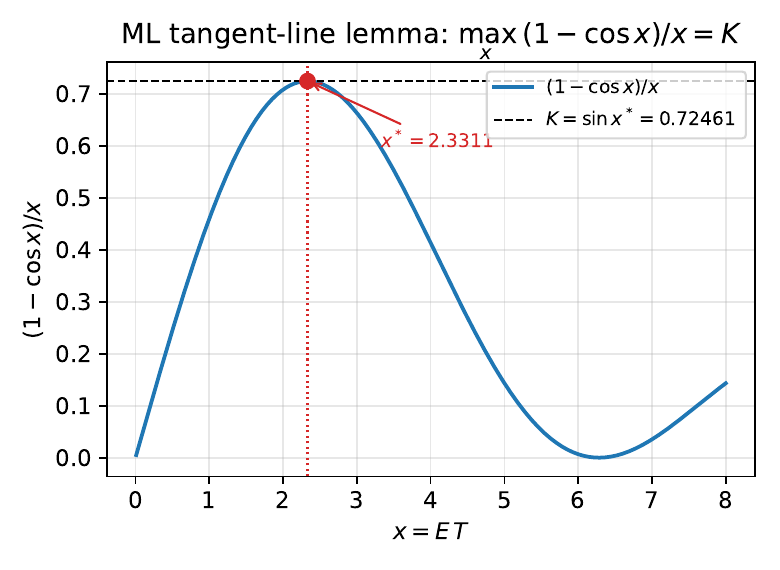}
\caption{The tangent-line lemma. $(1-\cos x)/x$ attains its supremum $K=\sin\xstar=0.724611\ldots$ at
$\xstar=2.331122\ldots$ (root of $\tan(x/2)=x$), which fixes the sharp constant $\Ctrue=1/(8K)$.}
\label{fig:tangent}
\end{figure}

\section{Exact constant, trade-off curve, and saturation}
\label{sec:constant}
The constant admits several equivalent closed forms,
\begin{equation}
\Ctrue=\frac{1}{8K}=\frac{1}{8\sin\xstar}=\frac{\xstar}{8(1-\cos\xstar)}=\frac{1+\xstar^2}{16\,\xstar},
\end{equation}
all equal to $0.172506267461\ldots$ and agreeing to $52$ digits (\texttt{mpmath}). It is dimension-independent and \emph{sharp}: it is approached,
though not attained, by the near-ground two-level family
$H=\mathrm{diag}(0,E)$, $A=\sigma_x$ ($\sigma_A=1$),
$\ket{\psi}=\cos\theta\ket0+\sin\theta\,e^{i\beta}\ket1$, in the limit $\theta\to0$ with $ET=\xstar$ and the
\emph{optimal relative phase} $\beta=\xstar/2-\pi/2$ (which starts $\mean{A(t)}$ at maximum slope rather than
at a turning point; Fig.~\ref{fig:traj}). There all three inequalities saturate simultaneously. A dedicated global optimizer over
$(E_n,A,\psi)$ in $d=2$--$10$ confirms the floor: the smallest ratio found is
$0.17250634\ge\Ctrue$, approached from above as $\theta\to0$, and no configuration dips below
(Fig.~\ref{fig:multid}). Phase restriction ($\beta=0$, real amplitudes) gives instead the strictly larger
$C_{\beta=0}=\dfrac{1}{16\,\phi_\star\sin^2\phi_\star}=\dfrac{\phi_\star}{4(1-\cos\phi_\star)^2}=0.185551\ldots$
($+7.6\%$), with $\phi_\star=2.78650\ldots$ the smallest positive root of $1-\cos x=2x\sin x$, saturated at the
turning-point configuration; the optimal relative phase recovers the full factor (Fig.~\ref{fig:converge}).

\medskip
\noindent\emph{The complete trade-off curve.} The constant $\Ctrue$ is the $\delta\to0$ slope of a sharp
\emph{curve}. With budget $P=T(\mean H-E_0)$ and normalized swing $\delta=\Delta/2\sigma_A\in[0,1)$, the
minimum budget that produces swing $\delta$---over all $(H,A,\rho,d,T)$---is
$P_\star(\delta)=\min\{T(\mean H-E_0):\Delta/2\sigma_A=\delta\}$, attained on a two-level Bohr-active subspace
and given parametrically in the Bohr phase $\tau=ET\in[\xstar,\pi)$ by
\begin{equation}
\begin{gathered}
\sin^2\theta=\frac{\tau-\tan(\tau/2)}{\tau-2\tan(\tau/2)},\qquad P_\star=\tau\sin^2\theta,\\[2pt]
\delta=\sin2\theta\,\sin(\tau/2),\qquad \tau=ET\in[\xstar,\pi).
\end{gathered}
\label{eq:curve}
\end{equation}
It interpolates from $P_\star(\delta)=4\Ctrue\delta^2[1+O(\delta^2)]$ as $\delta\to0$ (recovering the constant)
to $P_\star\to\pi/2$ as $\delta\to1$ (Fig.~\ref{fig:curve}), lying strictly above the elementary bound
$(1-\sqrt{1-\delta^2})/K$ of Sec.~\ref{sec:main}---by $1.3\%$ at $\delta=0.5$, $4.2\%$ at $\delta=0.8$, $8.2\%$
at $\delta=0.95$, but under $1\%$ in the near-ground window where the applications operate. The curve is the
\emph{observable image} of the Giovannetti--Lloyd--Maccone state trade-off: $\Delta=2\sigma_A\sqrt{1-F^2}$ is
saturable (align the $\pm\sigma_A$ eigenvectors of $A$ with the rank-two $\rho_T-\rho_0$), so the maximal swing
at fixed budget is $2\sigma_A\sqrt{1-F_{\min}^2}$ with $F_{\min}$ the minimal survival fidelity at fixed mean
energy~\cite{Giovannetti2003QuantumLimits}, whose two-level
optimizer~\cite{Hornedal2023MargolusLevitinArbitraryFidelity} carries over; a multi-start search over
$d=2$--$6$ finds no state below $P_\star(\delta)$ at any $\delta$. This is the observable counterpart---tight at
every $\delta$, not only in the constant---of the exact arbitrary-fidelity Margolus--Levitin curve for states,
whose defining equality of the two branch exponents was established in Ref.~\cite{Chau2023MLalpha}.

\medskip
\noindent\emph{Bandwidth-resolved constant.} If $\mathrm{spec}(H)\subset[E_0,E_0+B]$, the tangent-line
supremum runs only over $x\in(0,BT]$, so $K\to K(L)=\sup_{0<x\le L}(1-\cos x)/x$ with $L=BT$. Since
$(1-\cos x)/x$ increases on $(0,\xstar)$, $K(L)=(1-\cos L)/L$ for $L<\xstar$ and $K(L)=K$ otherwise, giving the
spectrally sharp constant
\begin{equation}
\Ctrue(L)=\frac{1}{8K(L)},\qquad T(\mean H-E_0)\ge \Ctrue(BT)\,\frac{\Delta^2}{\sigma_A^2},
\label{eq:bandwidth}
\end{equation}
which recovers $\Ctrue$ for $L\ge\xstar$ and is strictly stronger for narrow-band, short-time operation
($1.58\times$ at $L=1$, $2.96\times$ at $L=0.5$), saturated by a two-level gap at the band edge---sharpening
the floor for precisely the low-bandwidth devices (e.g.\ minimal clocks) where the bound is most relevant.

\begin{figure}[t]
\includegraphics[width=\columnwidth]{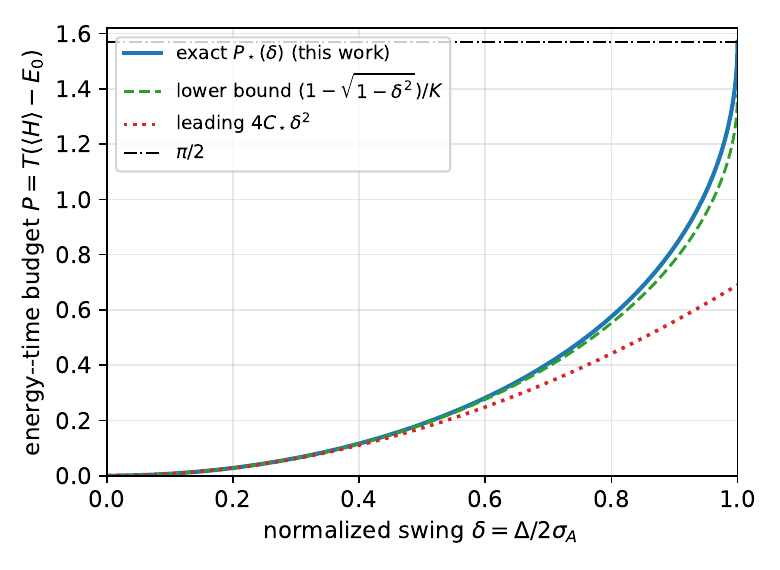}
\caption{The exact observable trade-off curve $P_\star(\delta)$, Eq.~\eqref{eq:curve}: the minimum energy--time
budget $P=T(\mean H-E_0)$ for a normalized swing $\delta=\Delta/2\sigma_A$, tight at every $\delta$. Its slope
as $\delta\to0$ is set by $\Ctrue$ and its $\delta\to1$ limit is $\pi/2$. It lies above the elementary lower
bound $(1-\sqrt{1-\delta^2})/K$ and its leading term $4\Ctrue\delta^2$.}
\label{fig:curve}
\end{figure}

\begin{figure}[t]
\includegraphics[width=\columnwidth]{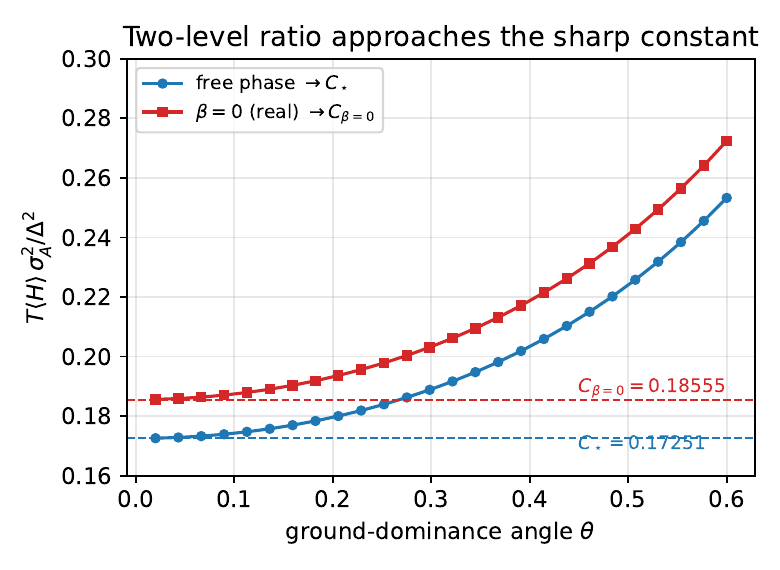}
\caption{The two-level ratio $T\mean{H}\sigma_A^2/\Delta^2$ versus the ground-dominance angle $\theta$,
approaching the sharp constant $\Ctrue$ (free relative phase) and the phase-restricted suboptimum
$C_{\beta=0}=0.185551$, both from above as $\theta\to0$.}
\label{fig:converge}
\end{figure}

\begin{figure}[t]
\includegraphics[width=\columnwidth]{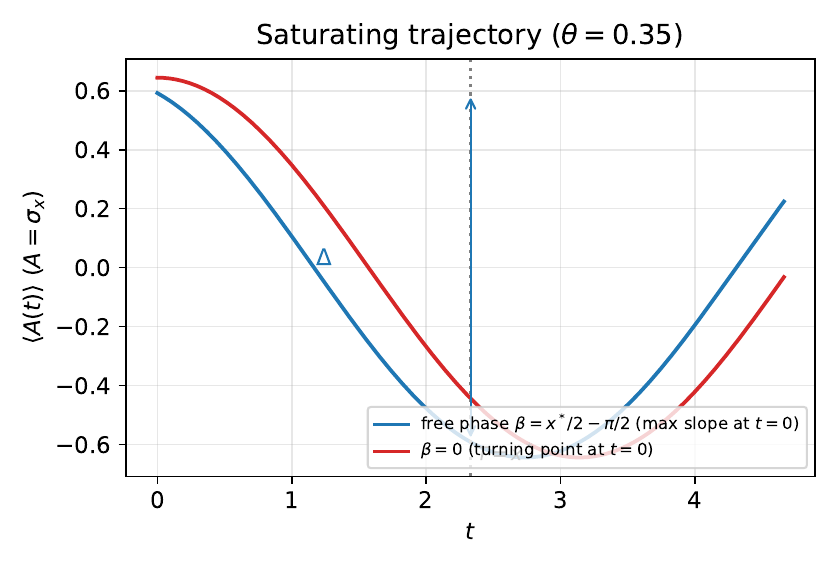}
\caption{Saturating trajectory $\mean{A(t)}$ ($A=\sigma_x$). With the optimal phase the expectation starts at
maximum slope (and reaches its excursion $\Delta$ at $T=\xstar/E$); the real-amplitude ($\beta=0$) case starts
at a turning point and is suboptimal.}
\label{fig:traj}
\end{figure}

\begin{figure}[t]
\includegraphics[width=\columnwidth]{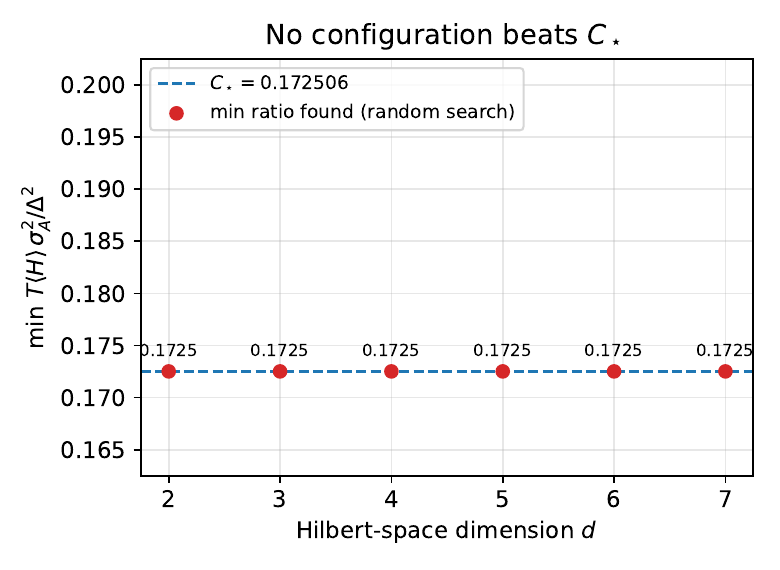}
\caption{Global-optimizer minimum of $T\mean{H}\sigma_A^2/\Delta^2$ per Hilbert-space dimension $d$.
No configuration in $d=2$--$7$ beats $\Ctrue$; each dimension approaches it from above.}
\label{fig:multid}
\end{figure}

\section{Mixed states and normalization}
\label{sec:mixed}
The bound holds for an arbitrary density matrix $\rho_0$ \emph{with the same constant} $\Ctrue$, with
$\rho_T=U\rho_0U^\dagger$, $U=e^{-iHT}$, $\Delta=|\Tr[A(\rho_T-\rho_0)]|$, $\mean{H}=\Tr[\rho_0H]$.
The geometric step generalizes via Hölder to $\Delta\le 2\sigma_A\Dtr(\rho_0,\rho_T)$, with $\Dtr$ the trace
distance. The key is to avoid an Uhlmann-fidelity detour (which loses a factor of two) by using
\emph{joint convexity} of the trace distance on the spectral decomposition $\rho_0=\sum_k p_k\ket k\!\bra k$:
since the \emph{same} $U$ evolves every eigenvector, $\rho_T=\sum_k p_k\,U\ket k\!\bra k U^\dagger$ is a convex
combination with the \emph{identical} weights $p_k$, so
\begin{equation}
\Dtr(\rho_0,\rho_T)\le\sum_k p_k\sqrt{1-F_k^2},\qquad F_k=|\!\braket{k|U|k}\!|.
\end{equation}
Per-eigenstate ML gives $1-F_k\le K T\,e_k$ with $e_k=\braket{k|H|k}\ge0$, and Jensen's inequality (concavity
of $\sqrt{\cdot}$, $\sum_k p_k e_k=\mean H$) yields
$\sum_k p_k\sqrt{2KTe_k}\le\sqrt{2KT\mean H}$. Combining,
$\Delta\le 2\sigma_A\sqrt{2KT\mean H}$, hence $T\mean H\ge\Ctrue\Delta^2/\sigma_A^2$. The same constant
follows independently by purification ($H\!\otimes\!I$, $A\!\otimes\!I$ preserve $\mean H$, $\sigma_A$ and
$\Delta$ exactly, so the pure bound descends). Each inequality is strict away from the pure $\theta\to0$ limit,
so every state with $\Delta>0$ lies strictly above $\Ctrue$; yet $\Ctrue$ remains the tight infimum for mixed
states as well (no pure/mixed gap), approached by mixing the saturating family with a vanishing ground-state
weight, consistent with the saturation analysis of Ref.~\cite{Sonnerborn2026MLsaturate}.

Among candidate normalizations, the spectral spread $\sigma_A=(\lambda_{\max}-\lambda_{\min})/2$ is canonical:
it is the Chebyshev radius $\min_c\|A-cI\|_{\mathrm{op}}$, which makes the Hölder step tight; it is
shift-invariant and state-independent. The operator norm $\|A\|_{\mathrm{op}}\ge\sigma_A$ gives a weaker valid
bound, while the state variance $\Delta A_\rho\le\sigma_A$ (Popoviciu) is the variance scale of the MT family
and is not delivered by this proof.

\medskip
\noindent\emph{Several observables at once.} Because the bound depends on the state only through
$X=\rho_T-\rho_0$, it constrains a whole family $\{A_i\}$ jointly: for any real combination $A_u=\sum_iu_iA_i$
with swing $\Delta_u=|\Tr[A_uX]|$ and Chebyshev radius $\|A_u\|_{\mathrm{Cheb}}$, Step~1 gives
$\Delta_u\le\|A_u\|_{\mathrm{Cheb}}\,2\sqrt{1-F^2}$, hence
\begin{equation}
T(\mean H-E_0)\ge \Ctrue\,\sup_{u}\frac{\Delta_u^2}{\|A_u\|_{\mathrm{Cheb}}^2},
\label{eq:multi}
\end{equation}
a mean-energy bound set by the operator-norm (Chebyshev) \emph{dual} geometry, in contrast to the Fisher-metric
multiparameter variance limits~\cite{Hamazaki2023QuantumVelocityLimits,Bringewatt2024GeneralizedGeometric}. For
two orthonormal observables (e.g.\ $\sigma_x,\sigma_y$) it reads
$T(\mean H-E_0)\ge\Ctrue(\Delta_1^2+\Delta_2^2)/\sigma_A^2$, twice the single-observable floor when
$\Delta_1=\Delta_2$.

\section{Structure: linear recovery and structured observables}
\label{sec:structure}
The quadratic law is a property of \emph{generic} observables, whose expectation starts at maximum slope. It
degrades to a \emph{linear} law when the initial state is an eigenvector of $A$ (a sufficient condition).
\begin{theorem}[Linear recovery]
If $A\ket{\psi_0}=\lambda\ket{\psi_0}$, then with $l=(\lambda-\mathrm{center}(\mathrm{spec}\,A))/\sigma_A\in[-1,1]$,
\begin{equation}
T\,(\mean{H}-E_0)\ge\frac{1}{2K(1+|l|)}\,\frac{\Delta}{\sigma_A},
\end{equation}
sharpest at an extreme eigenvalue ($|l|=1$): $T(\mean{H}-E_0)\ge \Delta/(4K)=2\Ctrue\,\Delta/\sigma_A$.
\end{theorem}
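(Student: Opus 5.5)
The idea is to exploit that an eigenvector of $A$ kills the first-order (coherence) term in $\mean{A(t)}$. The swing is then controlled by the \emph{population} that has left $\ket{\psi_0}$, namely $1-F^2$, rather than by its square root. After that I would reuse Step~2 of the Main theorem (the tangent-line lemma) unchanged. As in the Main theorem, I first normalize: shift $E_0=0$ so that $H\succeq0$, center $A$ so that $\mathrm{center}(\mathrm{spec}\,A)=0$ and $\|A\|_{\mathrm{op}}=\sigma_A$, and rescale to $\sigma_A=1$. The eigenvalue is then $\lambda=l\in[-1,1]$. Both sides of the claim are homogeneous of degree one in $\sigma_A$, so I restore $\sigma_A$ at the end.

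\emph{Step 1 (eigenvector geometry).} I decompose the evolved state as $\ket{\psi_T}=a\ket{\psi_0}+b\ket{\phi}$ with $\ket\phi\perp\ket{\psi_0}$ normalized, $|a|=F$ and $|b|^2=1-F^2$. Since $\mean{A(0)}=l$, I write
\begin{equation}
\mean{A(T)}-\mean{A(0)}=\braket{\psi_T|(A-l)|\psi_T}.
\end{equation}
Because $(A-l)\ket{\psi_0}=0$ and $A-l$ is Hermitian, both cross terms $\bar a b\braket{\psi_0|(A-l)|\phi}$ and its conjugate vanish, as does the diagonal $|a|^2$ term. Only $|b|^2\braket{\phi|(A-l)|\phi}$ survives. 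Since $\|A-l\|_{\mathrm{op}}\le\|A\|_{\mathrm{op}}+|l|=1+|l|$, this gives
\begin{equation}
\Delta\le(1+|l|)(1-F^2)\le 2(1+|l|)(1-F).
\end{equation}
This is the crucial contrast with Eq.~\eqref{eq:step1}: there the swing was bounded by $\sqrt{1-F^2}$.

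\emph{Step 2 (energetics).} Exactly as in Eq.~\eqref{eq:step2}, I would invoke the tangent-line lemma through $F\ge\mathrm{Re}\,\chi(T)$ to get $1-F\le K\,T\mean H$. Combining the two steps gives $\Delta\le 2K(1+|l|)\,T\mean H$, i.e.\ $T\mean H\ge\Delta/[2K(1+|l|)]$. Restoring $\sigma_A$ yields the stated bound. At $|l|=1$ the prefactor is $1/(4K)=2\Ctrue$, using $\Ctrue=1/(8K)$.

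I expect no serious obstacle; the only point needing care is Step~1. There I must verify that the cross terms truly vanish, which uses Hermiticity of $A-l$ to kill $\braket{\psi_0|(A-l)|\phi}$ as well as $(A-l)\ket{\psi_0}$. I must also use the shift-dependent norm $\|A-l\|_{\mathrm{op}}=1+|l|$ rather than $\sigma_A$: a Chebyshev recentering is not available here, because the reference point is pinned at $l$ by the eigenvalue condition. If a sharper constant were wanted, one could keep $1-F^2=(1-F)(1+F)$ exactly instead of bounding $1+F\le2$, but this is unnecessary for the statement as given.
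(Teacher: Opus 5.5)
Your proposal is correct and follows the paper's proof essentially line for line. It uses the same splitting of $\ket{\psi_T}$ along $\ket{\psi_0}$ and an orthogonal $\ket{\phi}$, the same cancellation of cross terms from $(A-l)\ket{\psi_0}=0$ and Hermiticity, the bound $|\braket{\phi|(A-l)|\phi}|\le 1+|l|$, and the same chain $1-F^2\le 2(1-F)\le 2KT\mean{H}$ via the tangent-line lemma.
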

\begin{proof}
Center $A$, $\sigma_A=1$; set $B=A-\lambda I$, so $B\ket{\psi_0}=0$. Writing
$\ket{\psi_T}=e^{i\alpha}F\ket{\psi_0}+\sqrt{1-F^2}\ket{\phi}$ with $\ket\phi\perp\ket{\psi_0}$, the cross
terms vanish ($B\ket{\psi_0}=0$, $B$ Hermitian), so $\Delta=(1-F^2)|\!\braket{\phi|B|\phi}\!|\le(1+|l|)(1-F^2)$
---linear in the fidelity deficit. With $1-F^2\le2(1-F)\le2KT\mean H$ the claim follows.
\end{proof}
\noindent Thus a fidelity-aligned observable such as the projector $\ket{\psi_0}\!\bra{\psi_0}$ literally
reduces the bound to the linear ML \emph{state} bound (then $\mean{A(t)}=F^2$, $\Delta=1-F^2$), recovering the
upper-right corner of the dichotomy. We emphasize that a turning point alone (zero initial slope without
$\ket{\psi_0}$ being an eigenvector) is \emph{not} sufficient: e.g.\ $\ket{\psi_0}=(\ket0+\ket1)/\sqrt2$,
$A=\mathrm{diag}(1,-1,0)$, $H$ with real $H_{01}$ has zero initial slope yet no positive linear constant
($r_{\mathrm{lin}}$ falls to $\approx0.03\ll1/4K$).

For structured observables the constant does not improve generically: a rank-one ground coupling
$A=\ket0\!\bra v+\ket v\!\bra0$ still gives $\inf=\Ctrue$ (the saturating family is itself of this form), and
an energy-banded ground-touching $A$ likewise. Restricting to real couplings and amplitudes raises the
constant to $C_{\beta=0}=0.185551$. A ground-decoupled observable ($A_{0n}=0$) conserves the excited
population and makes the time bound vacuous; a static energy floor $\mean H-E_0\ge(\Delta_{\mathrm{gap}}/2)\Delta/\sigma_A$
governs instead. The bound is informative for observables of finite spectral spread (spins, parity,
projectors, qubit registers); for unbounded observables ($\sigma_A=\infty$, e.g.\ position or particle number)
it is vacuous, and Mandelstam--Tamm, which uses the finite state variance, is operative.

\section{Numerical verification}
\label{sec:numerics}
All numerics are seeded and reproducible (\texttt{numerics/}; consolidated in
\texttt{results.json}). The named constant is pinned to $52$ digits with \texttt{mpmath}. The no-go exponent
is confirmed by the log--log slope of $P_{\min}(\Delta)$ along the saturating family ($2.017$, against the
predicted $2$). The pure-state bound survives a $1{,}555{,}984$-point random search in $d=2$--$6$
(worst-case slacks $\ge0$ to machine precision) and a dedicated multi-start optimizer
(Fig.~\ref{fig:multid}). At $d=2$ the coarse first-passage evaluator inside that optimizer is the least
accurate step, because the configurations that come closest to the floor are near-ground ones (excited
population $\sim10^{-11}$, swing $\sim10^{-5}$); re-evaluating the exact two-level ratio there with the
first passage solved analytically in $60$-digit arithmetic returns $0.1725103\ge\Ctrue$, so the apparent
sub-$\Ctrue$ minima are an artifact of that evaluator's time grid and not violations. The
mixed-state bound is confirmed by a violation search over random density matrices (minimum sampled ratio
$0.176\ge\Ctrue$) and by a direct audit of the convexity chain; the infimum over mixed states is again
$\Ctrue$, approached as the state purifies onto the saturating family. The structured constants $\Ctrue$
and $C_{\beta=0}$ are reproduced to eight significant digits, and the linear-recovery constant $1/4K$ is
approached to within $6\times10^{-5}$. The exact trade-off curve $P_\star(\delta)$ is confirmed as a strict lower boundary
by a multi-start search in $d=2$--$6$ (no state dips below it at any $\delta$); the bandwidth-resolved bound
$\Ctrue(BT)$ by a violation search over spectrally bounded $H$ ($0$ violations in $8\times10^5$ trials); and
the two-observable bound saturates the $2\times$ gain at $\Delta_1=\Delta_2$ ($0$ violations in
$4\times10^6$ trials).

\medskip\noindent\emph{Data availability.} The manuscript source, the figure and verification scripts, and
\texttt{results.json} are openly available, archived at Zenodo~\cite{repo} and developed at
\url{https://github.com/bryannasr4-gif/observable-margolus-levitin}.

\section{Applications and discussion}
\label{sec:apps}
The most important consequence is structural: Eq.~\eqref{eq:main} is the first mean-energy, quadratic-in-$\Delta$
bound on the change of an expectation value in a fixed state, the variance/Fisher members being \emph{linear} in
$\Delta$~\cite{GarciaPintos2022UnifyingObservables,mohan2022quantum,Bringewatt2024GeneralizedGeometric}.
Being quadratic, it is most constraining exactly where the linear bounds are weakest---for \emph{large}
fractional changes $\Delta/\sigma_A$---so the two families are complementary, not competing. Observable QSLs
are an active, growing area, including recent experiments~\cite{Miao2025Experiment}.

\paragraph{A regime where the bound binds.}
There is an identifiable window in which Eq.~\eqref{eq:main} is the operative observable limit. For the
near-ground qubit $H=\tfrac\omega2\sigma_z$, $A=\sigma_x$,
$\ket{\psi_0}=\sqrt{1-\varepsilon}\,\ket g+\sqrt\varepsilon\,\ket e$ (excited population $\varepsilon$, ground
state $\ket g$) evaluated at the half-period $T_\star=\pi/\omega$, the mean-energy floor exceeds the
Mandelstam--Tamm observable floor $\Delta/(2\Delta H\sigma_A)$ by the exact factor $8\Ctrue(1-\varepsilon)$,
which is $>1$ for ground dominance $\varepsilon<0.275$. There---small excited population, large fractional
swing---the quadratic mean-energy bound, not the variance bound, sets the limit. Read in reverse,
an observed swing $\Delta$ over time $T$ \emph{certifies} a mean energy $\mean H-E_0\ge\Ctrue\Delta^2/(\sigma_A^2T)$
from a single bounded-observable expectation, with no state-overlap measurement. With $\hat\Delta$ estimated
from finite samples (standard error $s$), error propagation gives the one-sided certificate
$\mean H-E_0\ge\Ctrue[\max(0,\hat\Delta-z_\alpha s)]^2/(\sigma_A^2T)$ at confidence $1-\alpha$---a
model-independent energy floor measurable on the platforms now realizing observable speed
limits~\cite{Miao2025Experiment}, in the regime above where it binds. This is the observable-side analog of
using a speed limit to infer the energy of an otherwise opaque quantum
device~\cite{ishida2025energyinference}.

\paragraph{Autonomous quantum clocks (primary).}
An autonomous clock is by definition driven by a fixed, time-independent
Hamiltonian~\cite{Erker2017AutonomousClocks,Woods2021AutonomousTicking}, so the hypotheses of
Eq.~\eqref{eq:main} hold exactly. Taking $A$ to be the clock's bounded pointer (hand) observable and $\Delta$
a resolvable advance, the bound becomes a coherent \emph{resolution floor},
\begin{equation}
T_{\mathrm{res}}\;\ge\;\hbar\,\Ctrue\,\frac{\Delta^2}{(\mean{H}-E_0)\,\sigma_A^2},
\end{equation}
i.e.\ the mean energy stored in the clockwork lower-bounds the time to move the hand by a resolvable amount.
A precessing pointer ($H=\tfrac\omega2\sigma_z$, $A=\sigma_x$) obeys it, with the bound tightest for the
smallest, lowest-energy ``minimal'' clock; saturation is a near-ground limit, and at its natural (equatorial)
operating point the pointer is an $A$-turning point---the linear regime of Sec.~\ref{sec:structure}---so it
illustrates rather than saturates the quadratic floor. This complements---rather than subsumes---the
established clock limits, which use \emph{different} resources: dissipated entropy per tick
($N\lesssim\Delta S_{\mathrm{tick}}/2k_B$)~\cite{Erker2017AutonomousClocks,Pearson2021Timekeeping} and the
thermalization rate ($N\le\Gamma^2/\nu^2$)~\cite{Meier2023AccuracyResolution}, both for open dissipative
clocks. Mean energy alone does not \emph{upper}-bound clock or computational speed~\cite{Jordan2017FastComputation};
ours is a complementary lower bound at fixed pointer contrast $\sigma_A$.

\paragraph{Autonomous quantum batteries (secondary).}
For an autonomous (time-independent generator) battery with $A=H_B$ the stored energy, $\Delta=W$ the energy
deposited, and $\sigma_B$ the battery's spectral spread, Eq.~\eqref{eq:main} gives a minimum charging time
$T_{\min}=\hbar\Ctrue W^2/[(\mean{H_{\mathrm{tot}}}-E_0)\sigma_B^2]$, quadratic in $W$. This is a
\emph{complementary} mean-energy floor: mainstream charging is driven by time-dependent control and its
operative power bounds are variance/Fisher
based~\cite{GarciaPintos2020BatteryPower,JuliaFarre2020Battery,Campaioli2024Colloquium} (the García-Pintos
bound has a published correction~\cite{Cusumano2021Comment}), and the reported quantum advantage stems from
super-extensive \emph{variance}, not mean energy. The same algebra rearranges into an energetic-cost floor on
changing a logical observable in any always-on (autonomous) architecture; it is, however, loose for typical
fast gates, and standard time-dependent gate pulses fall outside the hypotheses. Because $\sigma_B$ tracks the
battery bandwidth, this floor falls as $1/N$ for an $N$-cell battery and sits a fixed factor below the
Margolus--Levitin charging time even at $N=1$, and it bounds the deposited energy $W$ rather than the
ergotropy~\cite{Shrimali2024StrongerSpeedLimit}; we therefore offer it as an illustrative remark, not a
binding constraint.

\paragraph{What the bound does not constrain.}
Honesty about scope sharpens the claim. (i) Standard, time-dependently driven quantum gates violate the
time-independence hypothesis---essential here, since the Margolus--Levitin bound provably does not extend to
closed systems with time-dependent generators~\cite{Hornedal2023ClosedSystems,Hornedal2023MargolusLevitinArbitraryFidelity}. (ii) For extensive many-body systems the total mean energy grows like the
system size, so the bound is weak for \emph{local} observables; there the operative limit is locality-based
(Lieb--Robinson)~\cite{Chen2023LocalitySpeedLimits}. (iii) In quantum metrology/sensing the operative speed
limit is the variance (MT) one and practical readout times are set by signal-to-noise and decoherence far
above any mean-energy floor~\cite{HerbDegen2024Sensing}; the bound nonetheless lower-bounds the mean-energy
\emph{cost} of generating a given signal swing, complementing the asymmetry/weak-value observable speed limit
of Budiyono \emph{et al.}~\cite{Budiyono2026QSLObservables}. The bound is therefore a statement about
energy-constrained, few-body or near-ground dynamics.

\paragraph{Outlook.}
Natural extensions are open-system (Lindbladian), non-Hermitian~\cite{Nishiyama2025speedlimits,wang2026general}
and time-dependent-$H$ generalizations---where much of the driven-application landscape lives---and a sharp
constant for ergotropy rather than stored energy. Within the
present setting, the constant sharpens under a spectral-bandwidth restriction
($K\to\sup_{0<x\le BT}(1-\cos x)/x$) and generalizes to higher energy moments~\cite{ChauZeng2024UnifyingQSL}, and the bound assembles with the
observable Mandelstam--Tamm limit into a unified envelope
$T\ge\max[\,\Delta/(2\Delta H\sigma_A),\,\Ctrue\Delta^2/((\mean H-E_0)\sigma_A^2)\,]$ in the spirit of
Levitin--Toffoli~\cite{LevitinToffoli2009} and the bounded-spectrum treatment~\cite{Ness2022BoundedSpectrumQSL}.

\section{Conclusion}
We have established the missing mean-energy speed limit for observables: a no-go against any linear law and a
sharp quadratic bound $T(\mean{H}-E_0)\ge\Ctrue\Delta^2/\sigma_A^2$ with a closed-form, dimension-independent
constant, valid for pure and mixed states, tight, and degrading to a linear law for eigenvector observables.
Beyond the constant we gave the complete, tight trade-off curve $P_\star(\delta)$ (the observable analog of the
arbitrary-fidelity Margolus--Levitin curve), its bandwidth-resolved and multi-observable refinements, and its
backward reading as a model-independent mean-energy certificate. Together these make the
(mean-energy $\times$ observable) corner of the speed-limit dichotomy a complete and operationally grounded
one; its cleanest physical home is the thermodynamics of timekeeping.

\begin{acknowledgments}
I thank L. Maccone for comments on an earlier version.

The author received no funding for this work and declares no competing interests.
\end{acknowledgments}

\medskip\noindent\emph{Use of AI tools.} Anthropic Claude (Opus~4.x/5, via Claude Code, 2026) was used
extensively for copy-editing and rewriting the prose, and for parts of the verification and figure scripts in
the accompanying repository. All output was author-directed and author-verified: numerical claims were checked
against the committed scripts and every reference against its published record. The theorems, proofs, and
constants are the author's own. The author takes full responsibility for all content.

\bibliography{refs}

\end{document}